\definecolor{mygreen}{rgb}{0,0.6,0}
\definecolor{mygray}{rgb}{0.5,0.5,0.5}
\definecolor{mymauve}{rgb}{0.58,0,0.82}
\tiny\color{mygray}, 
\definecolor{darkred}  {rgb}{0.5,0,0}
\definecolor{darkblue} {rgb}{0,0,0.5}
\definecolor{darkgreen}{rgb}{0,0.5,0}
\newcommand{\be}{\begin{equation}}
\newcommand{\ee}{\end{equation}}
\newcommand{\bq}{\begin{eqnarray}}
\newcommand{\eq}{\end{eqnarray}}
\newcommand{\bea}{\begin{eqnarray}}
\newcommand{\eea}{\end{eqnarray}}
\newcommand{\ba}{\begin{align}}
\newcommand{\ea}{\end{align}}
\newcommand{\ket}[1]{ | \, #1 \rangle}
\newtheorem{theorem}{Theorem}
\newtheorem{proposition}[theorem]{Proposition}
\definecolor{mygray}{gray}{0.6}
\definecolor{mygray}{gray}{0.9}
\begin{document}
\title{A Novel Quantum N-Queens Solver Algorithm and its Simulation and Application to Satellite Communication Using IBM Quantum Experience}

\newcommand{\IISERP}{Department of Physical Sciences, Indian Institute of Science Education and Research, Pune, 411021, India}

\newcommand{\NISERJ}{School of Physical Sciences, National Institute of Science Education and Research, HBNI, Jatni, 752050, India}
\newcommand{\NMITB}{Department of Computer Science and Engineering, Nitte Meenakshi Institute of Technology, Bangalore, 560054, India}

\newcommand{\IISERK}{Department of Physical Sciences, Indian Institute of Science Education and Research Kolkata, Mohanpur, 741246, West Bengal, India}

\author{Rounak Jha$^1$}
\thanks{These authors contributed equally to this work.}
\author{Debaiudh Das$^2$}
\thanks{These authors contributed equally to this work.}
\author{Avinash Dash$^3$}
\thanks{These authors contributed equally to this work.}
\author{Sandhya Jayaraman$^4$}
\author{Bikash K. Behera$^3$}
\author{Prasanta K. Panigrahi$^3$}
\affiliation{$^1$\IISERP}
\affiliation{$^2$\NISERJ}
\affiliation{$^3$\IISERK}
\affiliation{$^4$\NMITB}

\begin{abstract}
      Quantum computers can potentially solve problems that are computationally intractable on a classical computer in polynomial time using quantum-mechanical effects such as superposition and entanglement. The $N$-Queens Problem is a notable example that falls under the class of NP-complete problems. It involves the arrangement of $N$ chess queens on an $N \times N$ chessboard such that no queen attacks any other queen, i.e. no two queens are placed along the same row, column or diagonal. The best time complexity that a classical computer has achieved so far in generating all solutions of the $N$-Queens Problem is of the order $\mathcal{O}(N!)$. In this paper, we propose a new algorithm to generate all solutions to the $N$-Queens Problem for a given $N$ in polynomial time of order $\mathcal{O}(N^3)$ and polynomial memory of order $\mathcal{O}(N^2)$ on a quantum computer. We simulate the $4$-queens problem and demonstrate its application to satellite communication using IBM Quantum Experience platform. 
\end{abstract}
 
\maketitle
 
\section{Introduction}
NP complete problems encompass all decision problems in class NP that can be verified in polynomial time \cite{qnq_UllmanJCSC1975,qnq_FarhiScience2001}. However, the time taken to obtain a solution for an NP complete problem is not polynomial. Some common examples of NP complete problems include the traveling salesman problem \cite{qnq_KargMS1964}, the subset sum problem \cite{qnq_CapraraAJO2000}, the Hamiltonian cycle problem \cite{qnq_PlesnikIPL1979}, the knapsack problem  \cite{qnq_ChuJH1998,qnq_JaszkiewiczIEEE2002} and satisfiability problem \cite{qnq_SuIEEE2016,qnq_PudnezarXiv2016} to name a few. 

The $N$-Queens Problem is an NP complete problem \cite{qnq_TorgglerarXiv2018}. It states that $N$ queens must be placed on an $N\times N$ chessboard such that no two queens can attack each other \cite{qnq_BellStevens2009}. The queens follow the moves that a classical chess queen follows, i.e. horizontal, vertical and diagonal jumps across any number of squares as long as the queen is unobstructed by the presence of another queen. The generalized $N$-Queens Problem \cite{qnq_BellStevens2009} has already been attempted over the years \cite{qnq_SouzaIEEE2017}. It has been approached using multiple classical computational methods \cite{qnq_AlharbiJO2017}. Some of these include the brute force algorithm \cite{qnq_MukherjeeIJFCST2015}, different variants of the backtracking algorithm \cite{qnq_KhateebDSE2013,qnq_ThadaIJCA2014} and the greedy technique \cite{qnq_LijoIJCSIT2015}. The brute force algorithm has the highest growth rate (with $N$) of the order $\mathcal{O}(N^N)$. The backtracking algorithm is slightly better than the former with a time complexity of the order $\mathcal{O}(N!)$. Other variants of the backtracking method such as the optimized backtracking and novice backtracking algorithms have an exponential time complexity \cite{qnq_ZengIEEE2011}.~The systematic and greedy search methods have a time complexity of the order $\mathcal{O}(N^3)$ to $\mathcal{O}(N^2)$. These algorithms, although solvable in polynomial time, often yield a set of limited solutions to the $N$-Queens Problem for a given $N$.

The $N$-Queens Problem finds vast application in various fields such as parallel memory storage schemes \cite{qnq_ErbasParallelStorage1992}, low density parity check codes \cite{qnq_LiParity2004}, deadlock prevention \cite{qnq_BellStevens2009}, VLSI testing \cite{qnq_WilliamsACV1986}, neural networks \cite{qnq_CheSciRep2018,qnq_JiangIEEE2018} and load balancing \cite{qnq_PanwarLoadBalancing2013} etc. 

\section{Solving the N-Queens Problem}

The criteria for a particular configuration of an $N\times N$ chessboard to be a solution to the $N$-Queens Problem are stated below. We shall collectively refer to them as the $N$-Queens Criteria.

\begin{itemize}
\item \textbf{Row Criteria:} The total number of queens in each row is $1$.
\item \textbf{Column Criteria:} The total number of queens in each column is $1$.  
\item \textbf{Diagonal Criteria:} The total number of queens in every possible diagonal is either $0$ or $1$.
\end{itemize}

We formulate this problem by representing the $N \times N$ chessboard as an $N \times N$ matrix. The positions of queens and the vacant spaces are denoted by $1$s and $0$s respectively. We represent each row of the resultant matrix as an $N$-qubit basis state. Hence, the $N$-qubit quantum register for all possible row configurations having one queen can be represented as,
\begin{equation}
    \ket{\psi}=\frac{1}{\sqrt{N}}[\ket{R_1}+\ket{R_2}+....+\ket{R_{N-1}}+\ket{R_N}]
    \label{qnq_nq_1}
\end{equation}

where $R_i$ ($i\in [1,N]\cap\mathbb{N}$) represents the $i$th row of the $N \times N$ identity matrix. The above state ($\ket{\psi}$) is known as the $N$-qubit $W$ state. Since there are $N$ such rows, we can create $N$ such $W$ states $\{\ket{\psi_i}\}$, where $\ket{\psi_i}$ represents the $W$ state corresponding to the $i$th row. We take the tensor product of all $\ket{\psi_i}$ as follows,
\begin{equation}
    \ket{\Psi}=\ket{\psi_1}\otimes\ket{\psi_2}\otimes....\otimes\ket{\psi_N}
\label{qnq_nq_2}
\end{equation}

The $N^2$-qubit state $\ket{\Psi}$ is a superposition of $N^N$ basis states, where each basis state represents a particular configuration of the $N$ queens on the $N \times N$ chessboard, such that no two queens are in the same row. Thus, by forming $\ket{\Psi}$, we have generated a set of configurations which satisfy the Row Criteria as required. The communication channel prepares the $N^2$-qubit system in $\ket{\Psi}$ state \cite{qnq_zheng2004,qnq_deng2006,qnq_grafe2014chip} before the quantum computer implements the proposed algorithm.

The first computational task is to distinguish those basis states in $\ket{\Psi}$ which satisfy the Column Criteria. For this purpose, it is convenient to visualize the system of $N^2$ qubits as $N$-blocks comprising $N$ qubits each, where the $i$th block represents the $i$th row of the matrix. In this step, we figure out an optimal algorithm to check whether there is only one queen in each column or not. Consider any particular column. We take an ancillary qubit initialized in the state $\ket{0}$. We perform the Hadamard operation on this qubit which transforms it to $\ket{+}$ state. We then apply a controlled phase shift $e^{i\pi}$ to each of the corresponding qubits specifying the column, with the control being the ancilla. For example, if we are considering the first column, we would apply the given operation to the first qubits of each block. As a result, the ancillary qubit transforms to $\ket{-}$ state if the column sum is odd, otherwise it remains in $\ket{+}$ state. We reapply the Hadamard gate to the resultant state. As a result, the ancilla is in the state $\ket{1}$ if the number of queens in that column is odd, and in the state $\ket{0}$ otherwise.

Since each block of the qubits is in the $W$ state, we have the constraint that in each block only a single qubit is in $\ket{1}$ state, which in turn implies that the system of $N^2$ qubits has $N$ qubits in $\ket{1}$ state corresponding to $N$ queens being placed on the board. Hence, if there exists a column whose sum is $m$, such that $m$ is odd and $m>1$, then there exists at least one column whose sum is zero, since the column sums add up to $N$. As a result, the ancillary qubit corresponding to such a column remains in $\ket{0}$ state. Thus, the only possibility in which all the ancillas end up in $\ket{1}$ state is when each column sum is $1$ as followed by the Column Criteria. 

However, it can be observed that, if $N$ non-negative integers add up to $N$, then there must be an even number of even numbers among them, where zero is also considered as an even number. A proof of this is presented in Supplementary Section. This enables us to perform the column checks for only $N-1$ columns instead of all $N$ columns. In our algorithm, we perform the column check operations for all columns except the last one, hence requiring $N-1$ additional qubits in total.

As a result of the column check operations, the $N-1$ ancillas are entangled to the system of $N^2$ qubits. It is to be noted that, all the entangled ancillas will be in the $\ket{1}$ state if and only if the associated $N^2$-qubit basis states are representative of the configuration satisfying the Column Criteria, i.e., they encode matrices which can be formed by row permutations of the $N \times N$ identity matrix. The row and column checks together have thus reduced the size of the search space to $N!$. The corresponding quantum circuit for implementing the column checks for any $N$ has been presented in Fig.~\ref{qnq_Fig1} \textbf{Part I}.

\begin{figure*}
    \includegraphics[width=\linewidth]{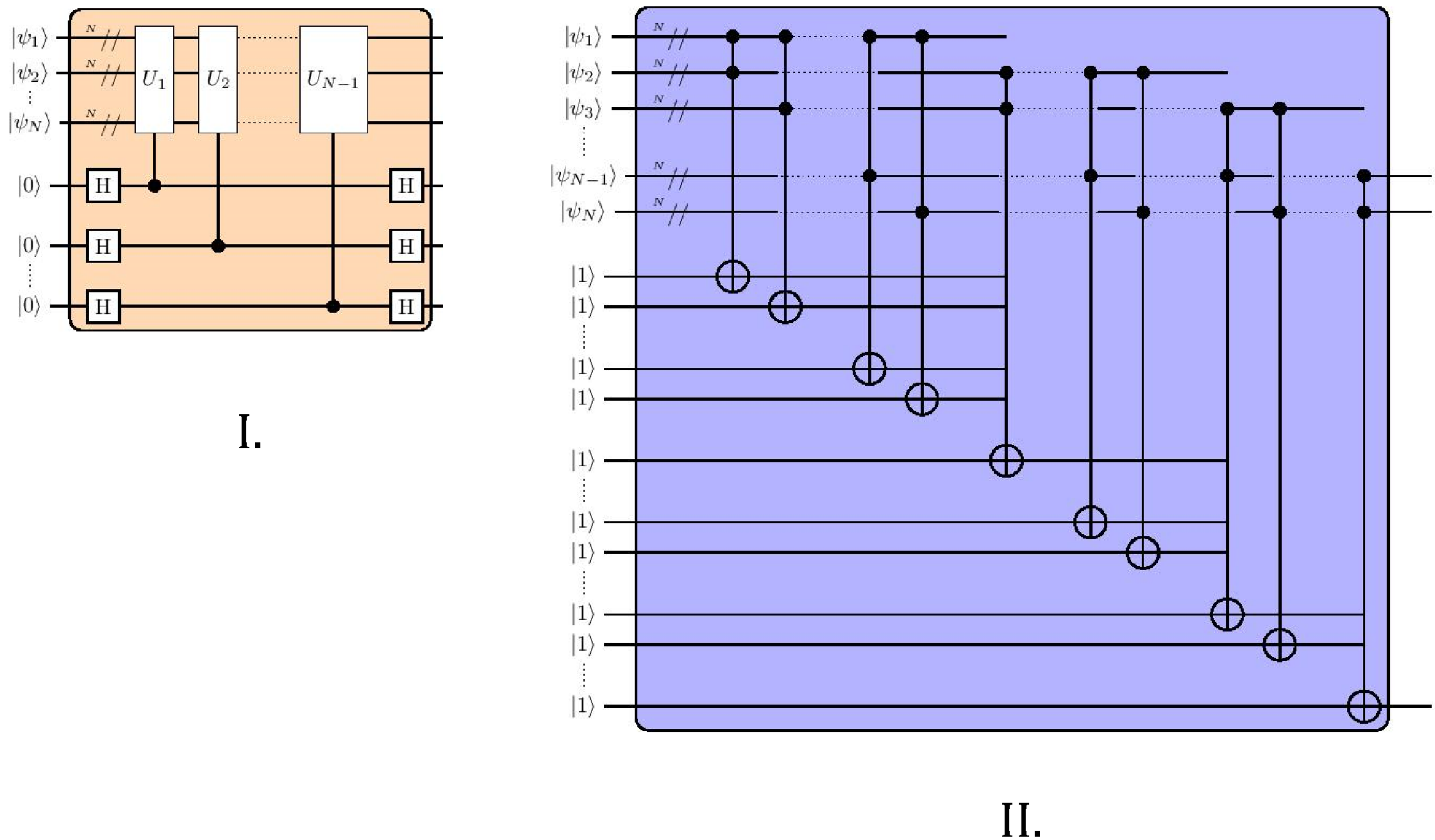}
    \caption{\textbf{Generalized quantum circuit for N-Queens Solver.} \textbf{Part I.} The generalized quantum circuit performs the column check operation in order to distinguish the basis states of $\ket{\Psi}$ satisfying the Column Criteria for reducing the search space. $N-1$ ancillary qubits along with a series of $N-1$ controlled unitary operations are used where the ancillas act as the controls. The operation $U_i$ denotes a set of $\sigma_z$ operations on the $i$th qubit of all the $N$ blocks, \textit{viz.} $\ket{\psi_1}$, $\ket{\psi_2}$,...., $\ket{\psi_N}$. Thus, each controlled unitary operation in turn represents a series of $N$ controlled-$Z$ operations. The set of $i$th qubits from every block represents the $i$th column of the chessboard. The $i$th ancillary qubit flips to $\ket{1}$ if the $i$th column sum is odd. Since the system of qubits satisfies the Row Criteria, the total sum of all $N$ column sums is $N$. As a result, all the $N-1$ ancillas flip to $\ket{1}$ if and only if each column sum is unity, which is only possible for basis states satisfying the Column Criteria. The circuit entangles the $N-1$ ancillas to the system. The reduced search space consists of basis states with all the $N-1$ entangled qubits in $\ket{1}$ state. The reduced search space is the set of configurations which are row permutations of the $N\times N$ identity matrix. \textbf{Part II.} The generalized quantum circuit performs the diagonal check on the basis states of $\ket{\Psi}$ belonging to the reduced search space, following the column check operation. A total of $\frac{N^2-N}{2}$ additional qubits are required for this purpose. A series of $3$-qubit operations are performed. In the circuit, the first control block of any $3$-qubit operation, say $\ket{\psi_i}$, is the one encoding $Q_i$, i.e. the queen in the $i$th row of the chessboard. The second control block of the $3$-qubit operation, say $\ket{\psi_j}$ encodes the matrix elements in the $j$th row of the chessboard which are diagonal to $Q_i$. Since for different configurations from the reduced search space, the queen $Q_i$ can be present in any one out of $N$ columns for any given row $i$, the qubit belonging to block $\ket{\psi_i}$ encoding $Q_i$ could be any one out of the $N$ qubits of the block, say $\ket{q_{ix}}$ for some $x\in \{1,2,....,N\}$. Here, each $3$-qubit operation represents a set of Toffoli gates, such that a $3$-qubit operation with the first and second control blocks being $\ket{\psi_i}$ and $\ket{\psi_j}$ respectively denotes a set of Toffoli gates with the first control qubit belonging to the set $\{\ket{q_{ix}}:\text{ }x\in[1,N]\cap\mathbb{N}\}$ and the second control qubit belonging to the set $\{\ket{q_{jy}}\}$ for a given $\ket{q_{ix}}$, where $\{\ket{q_{jy}}\}$ is the set of qubits in block $\ket{\psi_j}$ encoding matrix elements (in the $j$th row) which are diagonal to $\ket{q_{ix}}$. This leads to entanglement of the ancillas to the system. The basis states satisfying the Diagonal Criteria will not flip any auxiliary qubit. Hence, all the $\frac{N^2-N}{2}$ entangled qubits are in $\ket{1}$ state. Overall, the result of this circuit is that only the basis states which satisfy the $N$-Queens Criteria will have all the entangled $(\frac{N^2}{2}+\frac{N}{2}-1)$ ancillas in $\ket{1}$ state, and hence can be distinguished from the other basis states.}
	\label{qnq_Fig1}
\end{figure*}

Our final task is to extract the basis states satisfying the Diagonal Criteria. Now that our search space has been reduced to the row permutations of $I_{N \times N}$, there is only one queen in each row and each column. Let $Q_i$ denote the queen placed in the $i$th row. We define a function $f$ as follows.

\begin{eqnarray}
    &&\text{Let }Q=\{Q_1,Q_2,....,Q_N\}\nonumber \\
    &&f:Q\times Q \rightarrow \{0,1\} \text{ defined by, }\nonumber \\
    &&f(Q_i,Q_j)= \begin{cases}
       0\text{, if }j>i \text{ and }Q_i \text{, } Q_j\text{ are along a diagonal}\nonumber \\ 
       1\text{, otherwise}
       \end{cases}
    \label{qnq_nq_3}
\end{eqnarray}

It can be easily seen that, if for a particular configuration $f(Q_i,Q_j)=1$ $ \forall (Q_i,Q_j)\in Q \times Q$, then the configuration is a solution to the $N$-Queens Problem. We now propose the quantum algorithm to perform the diagonal checks to separate out the basis states of $\ket{\Psi}$ satisfying the $N$-Queens Criteria from the reduced search space of $N!$ basis states (which have been ``marked" with the $(N-1)$-qubit in $\ket{1}$ state of the entangled ancillary qubits, following the column check). Like earlier, it is convenient to imagine the system as a composition of $N$ blocks having $N$ qubits each, with the $i$th block corresponding to the $i$th row of the chessboard. Hence, the $y$th qubit of the $x$th block corresponds to the element of the $x$th row and $y$th column of the matrix. Let us denote such a qubit as $\ket{q_{xy}}$. Since the search space is a superposition of all basis states corresponding to configurations which are permutations of $I_{N\times N}$, any queen $Q_i$ corresponds to a $\ket{1}$ in the $i$th block is encoded by the qubit $\ket{q_{iy}}$ for some $y \in \{1,2,....,N\}$. It is observed that $|A|$ equals $\frac{N(N-1)}{2}$, where $A=\{(Q_i,Q_j)\in Q\times Q:j>i\}$. As a result, we require $\frac{N(N-1)}{2}$ additional qubits to calculate all possible evaluations of $f(Q_i,Q_j)$ $\forall (i,j):\text{ }j>i$. For a particular $i\in [1,N]\cap\mathbb{N}$, $(N-i)$ qubits out of the $\frac{N(N-1)}{2}$ ancillas are used to evaluate $f(Q_i,Q_j) \forall$ $j>i$. All the ancillas (denoted as $\ket{\delta_1}$, $\ket{\delta_2}$,...., $\ket{\delta_{\frac{N^2-N}{2}}}$) are initialized in $\ket{1}$ state. Let $C^2NOT$ denotes a Toffoli gate with the first and second qubits being the control and the third being the target. We perform the following set of operations $\{C^2NOT\ket{q_{ix}}\ket{q_{jy}}\ket{\delta_k}:i,j,x,y\in[1,N]\cap\mathbb{N},j>i, k=\frac{(i-1)}{2}(2N-i)+(j-i)\text{ and }q_{ix},q_{jy}\text{ represent matrix elements along a diagonal}\}$. The result of this set of operations is that, $\ket{\delta_k}$ flips to $\ket{0}$ if and only if $\ket{q_{ix}}$ and $\ket{q_{jy}}$ are in $\ket{1}$ state, which is equivalent to evaluating $f(Q_i,Q_j)$ for all configurations of the search space. As a result, all the $\frac{N(N-1)}{2}$ ancillas are entangled to the system $\ket{\Psi}$. The basis states that represent permutations of $I_{N\times N}$ and also satisfy the Diagonal Criteria will have all $\frac{N(N-1)}{2}$ associated ancillas in $\ket{1}$ state. These basis states encode the solutions to the $N$-Queens Problem. The corresponding circuit for implementing the diagonal checks for any $N$ has been presented in Fig.~\ref{qnq_Fig1} \textbf{Part II}. 

Overall, to solve the $N$-Queens Problem, a total of $(\frac{3}{2}N^2+\frac{N}{2}-1)$ qubits are used. Hence, the problem is solvable in $\mathcal{O}(N^2)$ qubits, implying that the problem is polynomial with respect to memory. The first $N^2$ qubits denote the system qubits having state $\ket{\Psi}$, which is a superposition of all basis states representing configurations satisfying the Row Criteria. The rest $(\frac{N^2}{2}+\frac{N}{2}-1)$ ancillary qubits are entangled with $\ket{\Psi}$, all of which possess $\ket{1}$ state only for the basis states encoding the solutions to the $N$-Queens Problem, following the execution of our protocol. The other basis states will have at least one associated ancilla which is in $\ket{0}$ state. Also, a total $(N+2)(N-1)$ operations are required for reducing the search space to the configurations satisfying the Column Criteria, besides the Row Criteria. It has been found that $\mathcal{O}(N^3)$ operations are required for the diagonal checks, a proof of which is presented in Supplementary Section. As a result, it takes quantum operations of order $\mathcal{O}(N^3)$ in total to solve the problem, demonstrating that the $N$-Queens Problem can be efficiently solved on a quantum computer in both polynomial time ($\mathcal{O}(N^3)$) and polynomial memory ($\mathcal{O}(N^2)$) using our protocol.

We simulated our $N$-Queens Solver protocol using the QISKit simulator, which is the same simulator used in IBM Quantum Experience as part of its Custom Topology feature. The simulation was performed for $N=4$ case, for which we know there exist $2$ solutions. Hence, a total of $25$ qubits ($16$ system qubits plus $9$ ancillas) were used.~Measurements were performed on all $25$ qubits.~The simulation was performed $310$ times.~It was observed that in two out of $183$ different measurement outcomes that were obtained, all the $9$ ancillary measurements yielded $\ket{1}$ state. The corresponding measurement reading of the $16$-qubit system in each such case represented a chessboard configuration that satisfied the $N$-Queens Criteria. Hence, we successfully obtained both the solutions to the $4$-queens problem by the implementation of our proposed algorithm. The QASM code for performing the given simulation has been presented in Supplementary Section.

\section{Application of $N$-Queens Problem: Satellite Communication}

One possible application of the $N$-Queens Problem is the transmission of information from satellites orbiting around the Earth without any data loss due to interference \cite{qnq_LutzITVT1991}. The $N$-Queens solution configurations can aid in maximizing the amount of information transmitted and the land area over which it is transmitted for a fixed number of $N$ beam reflectors on a satellite that bounds an area of $N\times N$ sq. units in space. The information is transmitted in the form of beams. 

\begin{figure}[!ht]
	\includegraphics[scale=0.45]{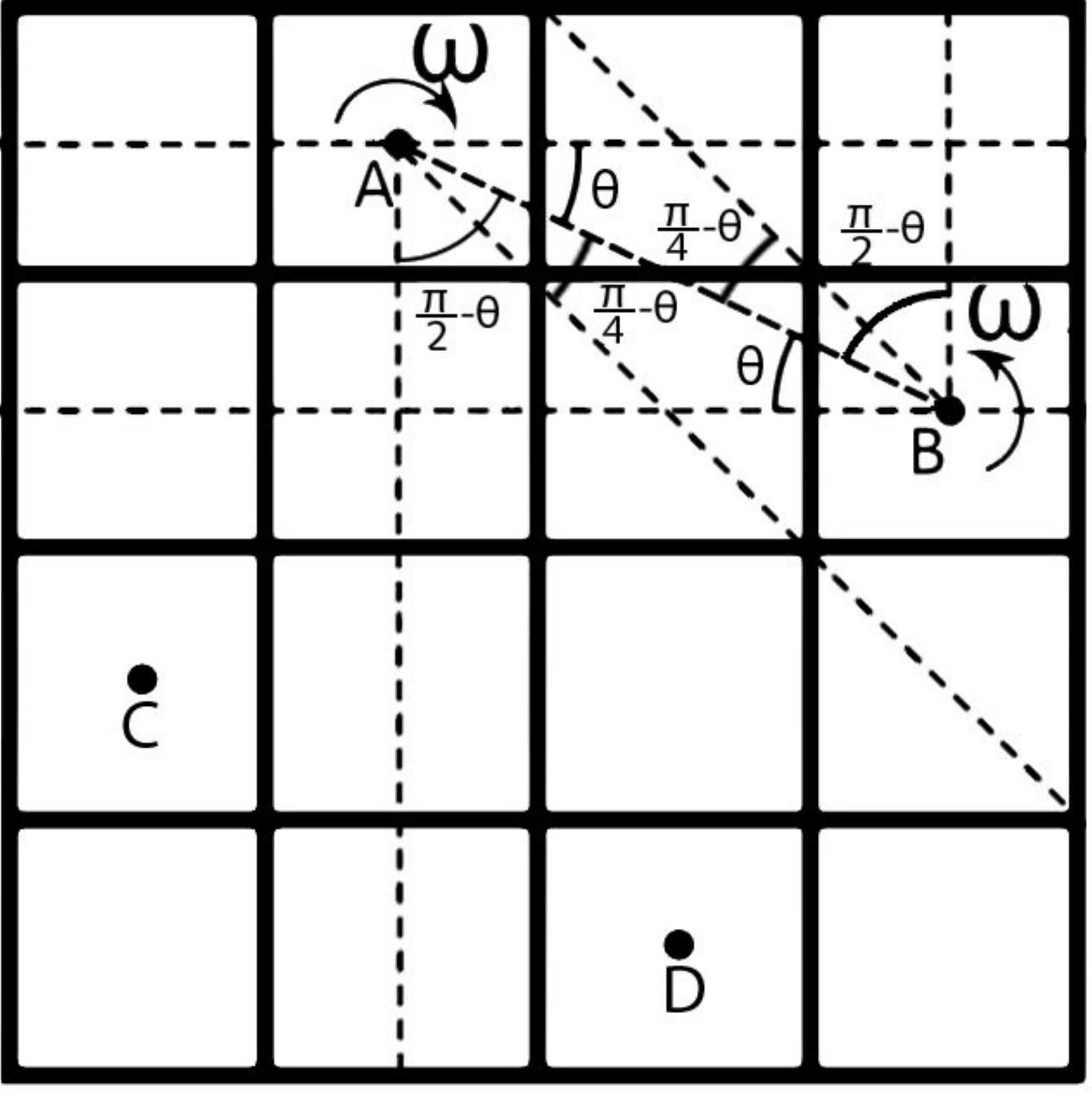}
    \caption{\textbf{Beam configuration for satellite communication.} A, B, C and D are four beam reflectors that are placed in the satellite plane (represented by the grid) in the $N$-Queens configuration. They all reflect beams horizontally, vertically and diagonally at all times (denoted by the dotted lines). A and B rotate at speeds $\omega$ and $-\omega$. If A and B rotate by $+\theta$ and $-\theta$ degrees respectively, they interfere with a phase difference that is random with time, along the line joining the two. The same happens if they rotate by $+(\pi/4-\theta)$ and $-(\pi/4-\theta)$ respectively, or $+(\pi/2-\theta)$ and $-(\pi/2-\theta)$ respectively.}. 
	\label{qnq_Fig2}
\end{figure}

Here, we make a few assumptions:
\begin{enumerate}
    \item All the N beam reflectors orbit the Earth in one plane i.e. the plane of the satellite at all times, and this plane bisects the Earth.
    \item The orientation of the plane of the satellite with respect to the Earth remains constant as the satellite revolves around the Earth.
    \item The N beam reflectors can transmit information vertically, horizontally and diagonally along all the planes.
    \item The N beam reflectors can only rotate about one of the three mutually orthogonal axes at a time i.e, two axes in the plane of the satellite and one axis perpendicular to the plane.
    \item The N beam reflectors have no restrictions while rotating about the axes that lie on the plane of the satellite. However, while rotating about the axis perpendicular to the satellite, all beam reflectors must rotate in the same direction, and with the same speed. 
    \end{enumerate}

It is obvious that if two signal waves are transmitted along the same diagonal, row or column, they interfere continuously after some point with a phase difference that varies randomly with time. This leads to the superposition of the signal waves, which results in the loss of information sent by both beam reflectors.

In an $N$-Queens solution state configuration, none of the N beams transmit information along the same row, column or diagonal. Therefore, the beams produced meet each other a maximum of once and then separate out again. This is because the waves transmitted by the beams have different frequencies. Even if they interfere at a point, they don't superpose with each other and the waveform remains the same. This results in the preservation of information.~We propose this as an optimal method for arranging beams on a satellite so as to maximize the information received by the detectors, and also the land area over which it is transmitted, without any loss of information.

\section{Discussion and Conclusion} 
We proposed a computational algorithm based on the quantum principles of superposition and entanglement to distinguish the solutions of the $N$-Queens Problem from among a superposition of $N^N$ configurations of the chessboard satisfying the criterion that each row must have only one queen. We encode the $N\times N$ chessboard matrix as a system of $N^2$ qubits, initialized in the state which is a tensor product of $N$ $W$-states, each comprising $N$ qubits, before the computation begins. Hence, the initial search space is the equal superposition of all $N^N$ configurations satisfying the Row Criteria, thus exploiting the property that quantum bits can exist in a superposition of basis states. Also, a total of $(\frac{N^2}{2}+\frac{N}{2}-1)$ ancillary qubits are utilized, which become entangled to the system and exist in different states in association with individual basis states of the initial search space (superposition state) as a consequence of our algorithm. Thus, each chessboard configuration in the initial search space is provided its signature ``marker" composed of $(\frac{N^2}{2}+\frac{N}{2}-1)$ ancillas. A specific configuration of these ancillas is bound to be associated with only the basis states encoding a chessboard configuration meeting the $N$-Queens Criteria. Specifically, $N-1$ ancillas are involved in the distinction of configurations having one queen in each row as well as in each column. In our proposed algorithm, the distinguishing feature of these configurations (basis states) is that all the entangled $N-1$ ancillas will be in $\ket{1}$ state. Likewise, the remaining $\frac{N^2-N}{2}$ ancillas are used for distinguishing configurations having one queen in every possible diagonal, from among the ones meeting the row and column criteria. Hence, the quantum advantage of superposition and entanglement manifested by a quantum computer enables quantum bits to simultaneously encode all elements of a database, and label the solution elements, depending on the problem \cite{qnq_Bravyiphysrev2016}. This is equivalent to finding the solutions to the problem, albeit the solutions cannot be made ``observable" to a classical observer unless some measurements are performed. The solutions are hidden in the quantum realm. These features enable a quantum computer to solve the NP-Complete $N$-Queens Problem in polynomial time, specifically in $\mathcal{O}(N^3)$ quantum operations using our algorithm, and also in polynomial memory, specifically in $\mathcal{O}(N^2)$ qubits. By performing several repetitions of the algorithm on a quantum computer, and noting down the list of all possible measurement outcomes, one can find out the solutions to the $N$-Queens Problem by observing the measurement outcome of the associated ancillas. 

The $N$-Queens Problem can potentially find applications in various real life situations, given that it is solvable in polynomial time and with polynomial resources. We concluded by proposing an application of the $N$-Queens Problem in satellite communication in which maximum number of beam reflectors can be arranged in a given satellite space such that information can be transmitted with no data loss due to interference. Since the beam reflectors are allowed to rotate, albeit with certain constraints, the surface area of the Earth receiving the transmissions without data loss is maximized.  

\section*{Acknowledgements} We thank Manabputra (NISER) for his fruitful discussions and Pronoy Das (IISER Kolkata) for making digital figures. We acknowledge the support of IBM Quantum Experience for providing access to the quantum processors. The views expressed are those of the authors and do not reflect the official policy or position of IBM or the IBM Quantum Experience team. We acknowledge financial support from Kishore Vaigyanik Protsahan Yojana (KVPY) and Department of Science and Technology (DST), Government of India.

\textbf{Author Contributions} D.D. proposed the idea of a quantum $N$-Queens Solver and suggested using the $N$-Queens Solver for an application. A.D., R.J. and S.J. developed the $N$-Queens Solver algorithm and performed the simulation. A.D. designed the quantum circuit. R.J. and S.J. proposed the satellite communication application. R.J., A.D., S.J., D.D. and B.K.B. contributed to the composition of the manuscript. B.K.B. supervised the project. R.J., D.D., A.D., S.J and B.K.B. have completed the project under the guidance of P.K.P.

\textbf{Author Information} The authors declare no competing financial interests. Correspondence and requests for materials should be addressed to P.K.P. (pprasanta@iiserkol.ac.in).

\textbf{Data availability.} The data that support the findings of this study are available from the corresponding author upon reasonable request.

\bibliographystyle{naturemag}

\begin{thebibliography}{10}
\expandafter\ifx\csname url\endcsname\relax
  \def\url#1{\texttt{#1}}\fi
\expandafter\ifx\csname urlprefix\endcsname\relax\def\urlprefix{URL }\fi
\providecommand{\bibinfo}[2]{#2}
\providecommand{\eprint}[2][]{\url{#2}}

\bibitem{qnq_UllmanJCSC1975}
\bibinfo{author}{Ullman, J.~D.}  
\newblock \bibinfo{title}{NP-Complete Scheduling Problems}.
\newblock \emph{\bibinfo{journal}{J. Comput. Sys. Sci.}}
  \textbf{\bibinfo{volume}{10}}, \bibinfo{pages}{384--393}
  (\bibinfo{year}{1975}).

\bibitem{qnq_FarhiScience2001}
\bibinfo{author}{Farhi, E.} \emph{et al.} 
\newblock \bibinfo{title}{A Quantum Adiabatic Evolution Algorithm Applied to Random Instances of an NP-Complete Problem}.
\newblock \emph{\bibinfo{journal}{Science}}
  \textbf{\bibinfo{volume}{292}}, \bibinfo{pages}{472--475}
  (\bibinfo{year}{2001}).

\bibitem{qnq_KargMS1964}
\bibinfo{author}{Karg, R.~L. \& Thompson, G.~L.} 
\newblock \bibinfo{title}{A Heuristic Approach to Solving Travelling Salesman Problems}.
\newblock \emph{\bibinfo{journal}{Manage. Sci.}}
  \textbf{\bibinfo{volume}{10}}, \bibinfo{pages}{225--248}
  (\bibinfo{year}{1964}).

\bibitem{qnq_CapraraAJO2000}
\bibinfo{author}{Caprara, A.}, \bibinfo{author}{Kellerer, H.} \& \bibinfo{author}{Pferschy, U.}  
\newblock \bibinfo{title}{The Multiple Subset Sum Problem}.
\newblock \emph{\bibinfo{journal}{SIAM J. Optim.}}
\textbf{\bibinfo{volume}{11}}, \bibinfo{pages}{308–-319}
  (\bibinfo{year}{2000}).

\bibitem{qnq_PlesnikIPL1979}
\bibinfo{author}{Plesnik, J.}   
\newblock \bibinfo{title}{The NP-Completeness of the Hamiltonian cycle problem in planar digraphs with degree bound two}.
\newblock \emph{\bibinfo{journal}{Inf. Process. Lett.}}
  \textbf{\bibinfo{volume}{8}}, \bibinfo{pages}{4}
  (\bibinfo{year}{1979}).

\bibitem{qnq_ChuJH1998}
\bibinfo{author}{Chu, P.~C.} \& \bibinfo{author}{Beasley, J.~E.}  
\newblock \bibinfo{title}{A Genetic Algorithm for the Multidimensional Knapsack Problem}.
\newblock \emph{\bibinfo{journal}{J Heurist.}}
  \textbf{\bibinfo{volume}{4}}, \bibinfo{pages}{63--86}
  (\bibinfo{year}{1998}).

\bibitem{qnq_JaszkiewiczIEEE2002}
\bibinfo{author}{Jaszkiewicz, A.} 
\newblock \bibinfo{title}{On the performance of multiple-objective genetic local search on the 0/1 knapsack problem - a comparative experiment}.
\newblock \emph{\bibinfo{IEEE T. Evolut. Comput.}}
  \textbf{\bibinfo{volume}{6}}, \bibinfo{pages}{4}
  (\bibinfo{year}{2002}).

\bibitem{qnq_SuIEEE2016}
\bibinfo{author}{Su, J.}, \bibinfo{author}{Tu, T.} \& \bibinfo{author}{He, L.}
\newblock \bibinfo{title}{A quantum annealing approach for Boolean Satisfiability problem}.
\newblock \emph{\bibinfo{Proc. 53rd Ann. Des. Autom. Conf.}}
  (\bibinfo{year}{2016}).
   
\bibitem{qnq_PudnezarXiv2016}
\bibinfo{author}{Pudenz, K.~L.}, \bibinfo{author}{Tallant, G.~S.}, \bibinfo{author}{Belote, T.~R.} \& \bibinfo{author}{Adachi, S.~H.}
\newblock \bibinfo{title}{Quantum Annealing and the Satisfiability Problem}.
\newblock \emph{\bibinfo{journal}{arXiv preprint arXiv:1612.07258}}
  (\bibinfo{year}{2016}).   

\bibitem{qnq_TorgglerarXiv2018}
\bibinfo{author}{Torggler, V.}, \bibinfo{author}{Aumann, P.}, \bibinfo{author}{Ritsch, H.} \& \bibinfo{author}{Lechner, W.}
\newblock \bibinfo{title}{A Quantum $N$-Queens Solver}.
\newblock \emph{\bibinfo{journal}{arXiv preprint arXiv:1803.00735}}
  (\bibinfo{year}{2018}).

\bibitem{qnq_BellStevens2009}
\bibinfo{author}{Bell, J.} \& \bibinfo{author}{Stevens, B.} 
\newblock \bibinfo{title}{A survey of known results and research areas for $N$-Queens}.
\newblock \emph{\bibinfo{journal}{Discrete Math.}}
\textbf{\bibinfo{volume}{309}}, \bibinfo{pages}{1--31}
  (\bibinfo{year}{2009}).

\bibitem{qnq_SouzaIEEE2017}
\bibinfo{author}{de Souza, F.~J.} \& \bibinfo{author}{de Mello, F.~L.}
\newblock \bibinfo{title}{$N$-Queens Problem Resolution Using the Quantum Computing Model}.
\newblock \emph{\bibinfo{journal}{IEEE Latin Am. Tran.}}
  \textbf{\bibinfo{volume}{15}}, \bibinfo{pages}{3}
  (\bibinfo{year}{2017}).

\bibitem{qnq_AlharbiJO2017}
\bibinfo{author}{Alharbi, S.} \& \bibinfo{author}{Venkat, I.} 
\newblock \bibinfo{title}{A Genetic Algorithm Based Approach for Solving the Minimum Dominating Set of Queens Problem}.
\newblock \emph{\bibinfo{journal}{J Optim.}}
  (\bibinfo{year}{2017}).

\bibitem{qnq_MukherjeeIJFCST2015}
\bibinfo{author}{Mukherjee, S.}, \bibinfo{author}{Datta, S.}, \bibinfo{author}{Chanda, P.~B.} \& \bibinfo{author}{Pathak, P.}   
\newblock \bibinfo{title}{Comparative study of different algorithms to solve N queens problem}.
\newblock \emph{\bibinfo{journal}{Int. J. Found. Comput. Sci. Technol.}}
\textbf{\bibinfo{volume}{5}}, \bibinfo{pages}{2}
  (\bibinfo{year}{2015}).
  

\bibitem{qnq_KhateebDSE2013}
\bibinfo{author}{Al-Khateeb, B.} \& \bibinfo{author}{Tareq, W.~Z.}
\newblock \bibinfo{title}{Solving 8-Queens Problem by Using Genetic Algorithms, Simulated Annealing, and Randomization Method}.
\newblock \emph{\bibinfo{journal}{Dev. eSys. Engr.}}
  (\bibinfo{year}{2013}).
   
\bibitem{qnq_ThadaIJCA2014}
\bibinfo{author}{Thada, V.} \& \bibinfo{author}{Dhaka, S.}
\newblock \bibinfo{title}{Performance Analysis of N-Queen Problem using Backtracking and Genetic Algorithm Techniques}.
\newblock \emph{\bibinfo{journal}{Int. J. Comput. Appl.}}
  \textbf{\bibinfo{volume}{102}}, \bibinfo{pages}{7}
  (\bibinfo{year}{2014}).

\bibitem{qnq_LijoIJCSIT2015}
\bibinfo{author}{Lijo, V.} \emph{et al.}   
\newblock \bibinfo{title}{Solving N-Queen Problem by Prediction}.
\newblock \emph{\bibinfo{journal}{Int. J. Comput. Sci. Inf. Technol.}}
\textbf{\bibinfo{volume}{6}}, \bibinfo{pages}{3844--3848}
  (\bibinfo{year}{2015}).

\bibitem{qnq_ZengIEEE2011}
\bibinfo{author}{Zeng, Z.}    
\newblock \bibinfo{title}{A High-Efficiency LDPC Encoder with Optimized Backtracking Algorithm}.
\newblock \emph{\bibinfo{journal}{3rd Int. Conf. Commun. Mob. Com.}},
\bibinfo{pages}{341--344}
  (\bibinfo{year}{2011}).


\bibitem{qnq_ErbasParallelStorage1992}
\bibinfo{author}{Erbas, C.} \& \bibinfo{author}{Tanik, M.~M.}
\newblock \bibinfo{title}{Storage schemes for parallel memory systems and the $N$-Queens Problem}.
\newblock \emph{\bibinfo{journal}{Proceedings of the 15th Anniversary of the ASME ETCE}}
  \textbf{\bibinfo{volume}{43}}, \bibinfo{pages}{115--120}
  (\bibinfo{year}{1992}).

  
\bibitem{qnq_LiParity2004}
\bibinfo{author}{Li, Peng},\bibinfo{author}{Guangxi, Z.} \& \bibinfo{author}{Xiao, L.} 
\newblock \bibinfo{title}{The low-density parity-check codes based on The n-queen problem}.
\newblock \emph{\bibinfo{journal}{Proceeding NRBC '04 Proceedings of the 2004 ACM workshop on Next-generation residential broadband challenges}}.
 \textbf{\bibinfo{volume}{4}}, \bibinfo{pages}{37--41}
  (\bibinfo{year}{2004}).


\bibitem{qnq_WilliamsACV1986}
\bibinfo{author}{Williams, T.~W.}  
\newblock \bibinfo{title}{VLSI testing}.
\newblock \emph{\bibinfo{journal}{Advances in CAD for VLSI}}
\textbf{\bibinfo{volume}{5}}, \bibinfo{pages}{284}
  (\bibinfo{year}{1986}).

\bibitem{qnq_CheSciRep2018}
\bibinfo{author}{Che, Z.}, \bibinfo{author}{Purushotham, S.}, \bibinfo{author}{Cho, K.}, \bibinfo{author}{Sontag, D.} \& \bibinfo{author}{Liu, Y.}    
\newblock \bibinfo{title}{Recurrent Neural Networks for Multivariate Time Series with Missing Values}.
\newblock \emph{\bibinfo{journal}{Sci. Rep.}}
\textbf{\bibinfo{volume}{8}}, \bibinfo{pages}{6085}
  (\bibinfo{year}{2018}).

\bibitem{qnq_JiangIEEE2018}
\bibinfo{author}{Jiang, Y.-G.}, \bibinfo{author}{Wu, Z.}, \bibinfo{author}{Wang, J.}, \bibinfo{author}{Xue, X.} \& \bibinfo{author}{Chang, S.-F.}    
\newblock \bibinfo{title}{Exploiting Feature and Class Relationships in Video Categorization with Regularized Deep Neural Networks}.
\newblock \emph{\bibinfo{journal}{IEEE T. Pattern Anal. Machine Intell.}}
\textbf{\bibinfo{volume}{40}}, \bibinfo{pages}{352--364}
  (\bibinfo{year}{2018}).

\bibitem{qnq_PanwarLoadBalancing2013}
\bibinfo{author}{Panwar, P.}, \bibinfo{author}{Saxena, V.~P.}, \bibinfo{author}{Sharma, A.} \& \bibinfo{author}{Sharma, V.~K.} 
\newblock \bibinfo{title}{Load Balancing using $N$-Queens Problem}.
\newblock \emph{\bibinfo{journal}{Int. J. Engr. Res. Technol.}}
  \textbf{\bibinfo{volume}{2}}
  (\bibinfo{year}{2013}).

\bibitem{qnq_zheng2004}
\bibinfo {author}{Zheng, S.-B.}
\newblock \bibinfo{title}{Scalable generation of multi-atom W states with a single resonant interaction}.
\newblock \emph{\bibinfo{journal}{Journal of Optics B: Quantum and Semiclassical Optics}}
  \textbf{\bibinfo{volume}{7}}
  (\bibinfo{year}{2004}).
  
\bibitem{qnq_deng2006}
\bibinfo {author}{Deng, ZJ}, \bibinfo{author}{Gao, K.L.}, \bibinfo{author}{Feng, M}
\newblock \bibinfo{title}{Generation of N-qubit W states with rf SQUID qubits by adiabatic passage}.
\newblock \emph{\bibinfo{journal}{Physical Review A}}
  \textbf{\bibinfo{volume}{74}}, \bibinfo{pages}{064303}, \bibinfo{number}{6}
  (\bibinfo{year}{2006}).

\bibitem{qnq_grafe2014chip}
\bibinfo {author}{Gr{\"a}fe, M. \textit{et al.}}
\newblock \bibinfo{title}{On-chip generation of high-order single-photon W-states}.
\newblock \emph{\bibinfo{journal}{Nature Photonics}}
  \textbf{\bibinfo{volume}{8}}, \bibinfo{pages}{791}, \bibinfo{number}{10}
  (\bibinfo{year}{2014}).
  
\bibitem{qnq_LutzITVT1991}
\bibinfo{author}{Lutz, E.}, \bibinfo{author}{Cygan, D.}, \bibinfo{author}{Dippold, M.}, \bibinfo{author}{Dolainsky, F.} \& \bibinfo{author}{Papke, W.}  
\newblock \bibinfo{title}{The land mobile satellite communication channel-recording, statistics, and channel model}.
\newblock \emph{\bibinfo{journal}{IEEE Trans. Veh. Technol.}}
\textbf{\bibinfo{volume}{40}}, \bibinfo{pages}{375--386}
  (\bibinfo{year}{1991}).
  
\bibitem{qnq_Bravyiphysrev2016}
\bibinfo{author}{Bravyi, S.}, \bibinfo{author}{Gosset, D.}, \& \bibinfo{author}{Koenig, R.}  
\newblock \bibinfo{title}{Quantum advantage with shallow circuits}.
\newblock \emph{\bibinfo{journal}{Phys. Rev.}}
\textbf{\bibinfo{volume}{A 94}}, \bibinfo{pages}{040302}
  (\bibinfo{year}{2016}).



\end{thebibliography}

{
\renewcommand{\addcontentsline}[3]{}

}

\onecolumngrid
\section*{Supplementary Information: A Novel Quantum N-Queens Solver Algorithm and its Simulation and Application to Satellite Communication Using IBM Quantum Experience}

The QASM code of the quantum circuit used in the simulation of our protocol for $N=4$ in IBM Quantum Experience is presented below. 
\lstinputlisting[language=Python]{NQ_QASM.py}

In our protocol, we utilize $N-1$ ancillary qubits to find out the parity of the column sums of $N-1$ columns, i.e. every column except the last. This is because, the only case in which all such $N-1$ column sums are odd is when each of the $N$ column sums are $1$. If at least one of the $N$ column sum is greater than $1$ and is odd, then there are at least two out of the $N$ column sums which is zero. If at least one of the $N$ column sum is greater than $1$ and is even, then there is at least one out of the $N$ column sums which is zero. These claims are a result of the following proposition, which is based on the fact that the total of the column sums is constrained to be $N$, a consequence of the already satisfied Row Criteria. This proposition enables us to check the parity of the column sums of any $N-1$ columns instead of all $N$ columns.\\

\begin{proposition}
Let the sum of $N$ non-negative integers be $N$, $N\in\mathbb{N}$. Then, the number of even integers among the $N$ non-negative integers is even, where $0$ is also considered to be even.
\end{proposition}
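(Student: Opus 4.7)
The plan is to reduce the claim to a single parity computation modulo $2$. Let $a_1, a_2, \ldots, a_N$ be non-negative integers with $\sum_{i=1}^N a_i = N$. Partition the index set according to the parity of $a_i$: let $E$ denote the number of indices $i$ with $a_i$ even (including zeros) and let $O = N - E$ denote the number of indices with $a_i$ odd. The goal is to show $E \equiv 0 \pmod{2}$.

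First I would reduce the sum modulo $2$. Each even $a_i$ contributes $0$ modulo $2$, and each odd $a_i$ contributes $1$ modulo $2$, so the total sum satisfies
\begin{equation}
\sum_{i=1}^N a_i \equiv O \pmod{2}.
\end{equation}
Since the sum equals $N$ by hypothesis, this gives the single clean relation $N \equiv O \pmod{2}$. Subtracting from $E + O = N$, I conclude $E \equiv N - O \equiv 0 \pmod{2}$, which is the statement of the proposition.

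There is essentially no obstacle here: the entire argument is a one-line parity bookkeeping, and the only conceptual point to flag is the convention made explicit in the statement that $0$ counts as even (which is consistent with the standard $\mathbb{Z}/2\mathbb{Z}$ reduction used above). In the write-up I would emphasize that the constraint $\sum a_i = N$ is precisely the already-guaranteed consequence of the Row Criteria in the main protocol, so that the proposition legitimately applies to the $N$ column sums and justifies checking the parity of only $N-1$ of them, as claimed in the column-check step of the algorithm.
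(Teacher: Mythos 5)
Your proof is correct and rests on the same parity bookkeeping as the paper's: the number of odd summands must share the parity of the total sum $N$, so the number of even summands $E = N - O$ is even. The only difference is cosmetic --- you work uniformly modulo $2$ in one line, whereas the paper splits into the cases $N$ odd and $N$ even before reaching the same conclusion.
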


\begin{proof}
We know that the sum of any number of even integers is even. However, the sum of an even number of odd integers is even and the sum of an odd number of odd integers is odd.\\

Among the $N$ integers, let us assume $N_o$ is the number of odd integers and $N_e$ is the number of even integers (which includes $0$). It is evident that $N_o+N_e=N$. Also, let $S_o$ and $S_e$ be the sum of the odd integers and even integers respectively. Thus, $S_o+S_e=N$. It is to be noted that $S_e$ is always even.\\

Consider the following cases,\\

\textbf{$N$ is odd:} $N$ is odd iff either $N_o$ or $N_e$ is odd. Also, since $S_e$ is even, $S_o$ is odd. This is only possible if $N_o$ is odd. Hence, $N_e$ is even.\\

\textbf{$N$ is even:} $N$ is even iff $N_o$ and $N_e$ are either both even or both odd. Since $S_e$ is even, $S_o$ is even. This is only possible if $N_o$ is even. Hence, $N_e$ is even.\\

Thus for all $N$, $N_e$ is even.
\end{proof}

Below, we show that the computational time complexity of our algorithm is $\mathcal{O}(N^3)$. 
\begin{proposition}
The time complexity for the proposed quantum $N$-Queens algorithm is of the order $\mathcal{O}(N^3)$.
\end{proposition}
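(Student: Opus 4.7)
The plan is to count operations in the two stages of the algorithm separately and then observe that the total is dominated by the diagonal check. For the column check stage, the body of the paper already records $(N+2)(N-1) = \mathcal{O}(N^2)$ elementary operations, since for each of the $N-1$ ancillas we perform two Hadamards plus $N$ controlled-$Z$ gates. So I would begin the proof by simply recalling this count and observing that it contributes only $\mathcal{O}(N^2)$ to the total.

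The substance of the argument is counting the Toffoli gates used in the diagonal check. For a fixed ordered pair of rows $(i,j)$ with $j>i$, I would argue that the number of Toffoli gates applied with first control in block $\ket{\psi_i}$ and second in block $\ket{\psi_j}$ equals the number of pairs $(x,y)$ with $x,y\in\{1,\dots,N\}$ such that the squares $(i,x)$ and $(j,y)$ lie on a common diagonal. Since diagonality forces $y = x \pm (j-i)$, for each of the $N$ choices of $x$ there are at most $2$ admissible values of $y$, giving at most $2N$ Toffolis per pair $(i,j)$. Summing over the $\binom{N}{2}$ pairs yields at most $2N\cdot\binom{N}{2} = N^2(N-1) = \mathcal{O}(N^3)$ Toffoli gates.

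Combining the two stages, the total operation count is
\begin{equation}
(N+2)(N-1) + \mathcal{O}(N^3) = \mathcal{O}(N^3),
\end{equation}
which is the claimed bound. I would close by noting that each Toffoli and each controlled-$Z$ decomposes into $\mathcal{O}(1)$ elementary gates, so the bound is genuinely in terms of elementary quantum operations.

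The main obstacle is really just bookkeeping: one has to make sure the diagonality constraint is handled correctly, in particular that for a fixed $(i,j,x)$ there are never more than two values of $y$ on either the main or the anti-diagonal. Tighter counting using boundary effects (the actual number is $2(N-(j-i))$ rather than $2N$) would give a sharper constant but does not change the $\mathcal{O}(N^3)$ asymptotic, so I would not belabor it in the proof.
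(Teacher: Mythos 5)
Your proposal is correct and follows essentially the same route as the paper: count the $(N+2)(N-1)$ column-check gates, then bound the diagonal-check Toffolis by counting diagonally related pairs of squares across distinct rows. The only difference is that you use the clean upper bound $2N\binom{N}{2}=N^2(N-1)$ where the paper evaluates the exact sum $\frac{N(N-1)(2N-1)}{3}$ (consistent with your parenthetical $2(N-(j-i))$ refinement), and both give $\mathcal{O}(N^3)$.
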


\begin{proof}
The computational time complexity of an algorithm is estimated as the limiting behaviour of the number of elementary operations (gates) required by the computation as a function of the input size $N$, as $N\rightarrow\infty$. The quantum $N$-Queens algorithm performs computation for,\\

\textbf{Column checks:} $2(N-1)$ Hadamard gates are used in the column checks, since two Hadamard gates are applied to each column check ancilla. $N$ controlled phase shifts are applied to the system from each ancilla, hence a total of $N(N-1)$ controlled phase shifts are used. Thus, a total of $(N-1)(N+2)$ gates are used in the column checks.\\ 

\textbf{Diagonal checks:} The diagonal checks are performed by imagining the system of $N^2$ qubits to be arranged as a $N\times N$ matrix, such that the $i$th qubit of the $j$th block encodes the matrix element of the $i$th column and $j$th row. Every pair of qubits that encode elements positioned diagonally are checked for the diagonal condition. Since for every such pair a Toffoli gate with the corresponding qubits as controls is used, the number of gates required is equal to the number of such diagonal pairs. The general equation for the number of diagonal pairs is, $\sum_{i=1}^{N-1}\sum_{j=1}^{N-i} 2(N-j)$, which can be equivalently notated as 2$\sum_{i=1}^{N-1}\sum_{j=1}^{i} (N-j)$. This evaluates to $N^2(N-1)-N(N-1)-\frac{N(N-1)(N-2)}{3}$.\\

Thus, the total number of gates as a function of input size $N$ has the limiting behaviour (as $N\rightarrow\infty$) defined by $f:\mathbb{N}\rightarrow\mathbb{N},\text{ }f(N)=N^3$. Hence, the time complexity is of the order $\mathcal{O}(N^3)$.
\end{proof}
	
\end{document}